\providecommand{\U}[1]{\protect\rule{.1in}{.1in}}
\providecommand{\U}[1]{\protect\rule{.1in}{.1in}}
\theoremstyle{plain}
\newtheorem{theorem}{Theorem}
\newtheorem{proposition}{Proposition}
\newtheorem{assumption}{Assumption}
\newtheorem{remark}{Remark}
\newtheorem{definition}{Definition}
\renewenvironment{proof}[1][Proof]{\noindent\textbf{#1.} }{\ \rule{0.5em}{0.5em}}
\begin{document}
\title{PERFECT SAMPLING OF MULTIVARIATE HAWKES PROCESSES}

% AUTHOR: Enter the authors of the article, see end of the example document for further examples
\author{Xinyun Chen and 
	Xiuwen Wang\thanks{Xiuwen Wang is also with the Shenzhen Research Institute of Big Data.}\\ [12pt]
	School of Data Science\\
	The Chinese University of Hong Kong, Shenzhen\\
	2001 Longxiang road, Longgang District\\ Shenzhen, Guangdong, 518172, CHINA\\
}
\date{July 12, 2020}
\maketitle
\begin{abstract}
As an extension of self-exciting Hawkes process, the multivariate Hawkes process models counting processes of different types of random events with mutual excitement. In this paper, we present a perfect sampling algorithm that can generate i.i.d. stationary sample paths of multivariate Hawkes process without any transient bias. In addition, we provide an explicit expression of algorithm complexity in model and algorithm parameters and provide numerical schemes to find the optimal parameter set that minimizes the complexity of the perfect sampling algorithm.
\end{abstract}
\section{Introduction}\label{sec: intro}
\label{sec:intro}
As an extension of self-exciting Hawkes process, the multivariate Hawkes process has been popularly used in literature to model the counting process of different types of random events because of its ability to capture mutual excitation. In finance, multivariate Hawkes process is used to examine market returns and investor sentiment interactions \cite{yang2018applications} and predict cyber-attacks frequency \cite{bessy2020multivariate}. In medical science, it is used to fit clinical events in patient treatment trajectory \cite{duan2019clinical}. In population dynamics, it is adopted in modeling the lapse risk in life insurance \cite{barsotti2016lapse}. In social media, retweet cascades \cite{rizoiu2017tutorial} and popularity evolution of mobile Apps \cite{ouyang2018modeling} can be well predicted using this process.

There are some previous works on simulation of multivariate Hawkes processes. The early works mostly rely on the intensity-based thinning method such as \cite{ogata1998space} and \cite{liniger2009multivariate}. Improved methods are proposed in \cite{giesecke2007estimating} and \cite{dassios2013exact} that can sample from inter-arrival times directly without generating intensity paths. However, previous works are not focused on the stationary distribution of the Hawkes process and can not be directly applied to generate unbiased samples from the steady state of the Hawkes process. To the best of our knowledge, our paper is the first perfect sampling algorithm for multivariate Hawkes process, i.e. sample from the exact stationary distribution without approximation. In this paper, we extend the perfect sampling algorithm of univariate Hawkes process in \cite{Chen_2020} to multivariate Hawkes processes with mutual excitements. The key component of our algorithm is to represent the Hawkes process as a Poisson-cluster process where each cluster follows a multivariate branching process. In addition, we derive an explicit expression of the algorithm complexity as a function of the model and algorithm parameters. In particular, we show that the complexity function is convex in algorithm parameters and hence there is a unique optimal choice of algorithm parameters that minimizes the algorithm complexity, which can be numerically computed using standard convex optimization solvers. 

The rest of the paper is organized as follows. In Section \ref{sec: model}, we introduce the cluster representation of multivariate Hawkes processes and our assumptions. In Section \ref{sec: algorithm Hawkes}, we introduce the perfect sampling algorithm and derive the complexity function. In Section \ref{sec: numerical}, we implement the perfect sampling algorithm and test the performance of the simulation algorithm. 

\section{Model and Assumptions}\label{sec: model}
\subsection{Hawkes Process}\label{subsec: Hawkes}
A $d$-variate Hawkes process is a $d$-dimension counting process $N(t)=(N_1(t), N_2(t),\ldots, N_d(t))$ that satisfies 
\begin{equation*}
P(N_i(t+\Delta t)-N_i(t)=m|\mathcal{F}(t))=
\begin{cases}
\lambda_i(t)\Delta t + o(\Delta t), & m=1\\
o(\Delta t),& m>1\\
1-\lambda_i(t)\Delta t + o(\Delta t)& m=0,
\end{cases}
\end{equation*}
as $\Delta t\to 0$, where $\mathcal{F}(t)$ is the associated filtration and $\lambda(t)=(\lambda_1(t),\ldots,\lambda_d(t))\in\mathbb{R}_+^d$ is called the \textit{conditional intensity} such that
\begin{equation}\label{eq: Hawkes intensity}
\lambda_i(t) = \lambda_{i0} + \sum_{j=1}^d\sum_{\{k:t^j_k<t\}}h_{ji}(t-t_{k}^j),\ i=1,\ldots,d,
\end{equation} 
where for each $i\in\{1, 2, ..., d\}$, $\{t^i_k:-\infty<k<+\infty\}$ are the arrival times in direction $i$ and the constant $\lambda_{i0} >0$ is called the \textit{background intensity} in direction $i$, and for all $i, j \in \{1, 2, ..., d\}$, the function $h_{ji}:\mathbb{R}_+\to\mathbb{R}_+$ is integrable and called the \textit{excitation function}.

From the conditional intensity function \eqref{eq: Hawkes intensity}, we see that an arrival in any direction $j$ will increase the future intensity functions $\lambda_i$  for all $i$ through the non-negative excitation function $h_{ji}(t)$. Example of such excitation function like $\alpha e^{-\beta t} $ with $\alpha,\beta>0$ is widely used in practice. Therefore, arrivals of a multivariate Hawkes process are mutually-exciting. According to \cite{Hawkes1974}, for the linear Hawkes process, when the excitation functions are non-negative, the counting process $N(t)$ can be represented as a sum of independent processes, or clusters. Now we introduce another equivalent definition (or construction) of Hawkes process which represents it as clusters of branching processes.

\begin{definition}\label{def: Hawkes cluster}(\textbf{Cluster Representation of Multivariate Hawkes})  Consider a (possibly infinite) $T\geq 0$ and define a sequence of events $\{t^i_{n}\leq T\}$ in direction $i$, $i=1,\ldots,d$, according to the following procedure:
	\begin{enumerate}
		\item For direction $i=1,\ldots,d$, a set of immigrant events $\{\tau^i_{m}\leq T\}$  arrive according to a Poisson process with rate $\lambda_{i0}$ on $[0,T]$.
		\item For each immigrant event $\tau^i_{m}$, define a cluster $C_m^i$, which is the set of events brought by immigrant $\tau^i_m$. We index the events  in $ C_{m}^i$ by $k\geq 1$ and represent it by a tuple $e^i_{m,k}=(k,d^i_{m,k},p^i_{m.k},t^i_{m,k})$, where $d^i_{m,k}\in \{1, 2, ..., d\}$ is  direction at which the event occurs, $p^i_{m,k}$ is the index of its parent event, and $t^i_{m,k}$
		is its arrival time. Following this representation, we denote the immigrant event as $(1,i,0,\tau^i_{m})$.  
		\item Each cluster $C^i_m$ is generated following a branching process. In particular, upon the arrival of each event (say, in direction $d$), its direct offspring in direction $j$ will start to arrive according to a inhomogeneous Poisson process with rate function $h_{dj}(\cdot)$. (We will provide the complete procedure to generate a cluster in Algorithm \ref{alg: cluster} at the end of this section.)
		\item Collect the sequence of events $\{t^i_{n}\} = \overset{d}{\underset{ j=1}{\cup}}\overset{\infty}{\underset{m=1}{\cup}} \{t_{m,k}^j: k=1,..., |C^j_m|,d^j_{m,k}=i\}$.
	\end{enumerate}
	Then, in each direction $i$, counting process $N^i(t)$ corresponding to the event sequence $\{t_n^i\}$ is equivalent to the Hawkes process defined by conditional intensity function \eqref{eq: Hawkes intensity}.
\end{definition}

Now we introduce more random variables related to the Hawkes process that will be useful in our simulation algorithm and briefly analyze their distributional properties.
\begin{itemize}
	\item  We define for each cluster $C_m^i$ its \textit{cluster length} $L_m^i$ as the amount of time elapsed between the immigrant event and the last event in this cluster, i.e.
	\begin{equation}\label{eq: L}
	L_m^i \triangleq \max_{1\leq k\leq |C_m^i|}t^i_{m,k} - \tau_m^i.
	\end{equation}  
	We call $\tau_m^i$ the \textit{arrival time} of the cluster $C_m^i$.
	\item Let $\bar{h}_{ij} = \int_0^\infty h_{ij}(t)dt$. According to Definition \ref{def: Hawkes cluster}, multivariate Hawkes process has self-similarity property. That is, upon its arrival, an immigrant in direction $i$ will bring new arrivals in direction $j$ according to a time-nonhomogeneous Poisson process with rate $h_{ij}(t)$ for $j=1,2,\ldots,d$ and each arrival will bring other new arrivals following the same pattern. And $\bar{h}_{ij}$ is the expected number of next generation in direction $j$ brought by each event in direction $i$. 
	\item For any event $t^i_{m,k}$ in $C^i_{m}$ that is not an immigrant, and the arrival time of its parent is $t^i_{m,p^i_{m,k}}$ following our notation, we define the \textit{birth time} $b^{i}_{m,k}$ of event $t^i_{m,k}$ as 
	\begin{equation}\label{eq: b}
	b^{i}_{m,k} \triangleq t^i_{m,k}-t^i_{m,p^i_{m,k}}.
	\end{equation}
	Following the property of inhomogeneous Poisson process, the birth time of the events in direction $j$ whose parent in direction $l$ are independent identical distributed random variables that follow the probability density function $f_{lj}(t)\triangleq h_{lj}(t)/\bar{h}_{lj}$ for $t\geq 0$ conditional on number of events.
	\item Given the clear meaning of $\bar{h}_{ij}$ and $f_{ij}(\cdot)$ in the cluster representation of Hawkes process, in the rest of the paper, we shall denote by $(\mathbf{\lambda_{0}}, \bar{h}, f(\cdot))$ as the parameters that decide the distribution of a d-variate Hawkes process, where $\bar{h}=(\bar{h}_{ij})_{1\le i,j\le d}$ is a $d\times d$ matrix, $f(\cdot)=(f_{ij}(\cdot))_{1\le i,j\le d}$ is a $d\times d$ matrix-value function,  and $\mathbf{\lambda_{0}}=(\lambda_{0,i})_{1\le i\le d}$ is a vector in $\mathbb{R}^d$. Now, we could provide the complete procedure for Step 3 of Definition \ref{def: Hawkes cluster} in Algorithm \ref{alg: cluster}.
	\begin{algorithm}[H]
		\caption{Generate a cluster $C^i_m$ with parameter $(\bar{h},f(\cdot))$}
		\label{alg: cluster}
		\begin{algorithmic}
			\REQUIRE parameters $(\bar{h}, f(\cdot))$, an immigrant event $e^i_{m,1}=(1,i,0,\tau_m^i)$, time horizon $T$.
			\ENSURE  $C_m^i=\{e^i_{m,k},k=1,\ldots,|C^i_m|\}$ \\
			\STATE 1. Initialize $k=1$, $C_m^i=\{(1,i,0,\tau^i_m)\}$ and $n=|C^i_m|$.\\
			\STATE 2. While $n\ge k$, generate the next-generation events of $e^i_{m,k}=(k,d^i_{m,k},p^i_{m,k},t^i_{m,k})\in C^i_{m}$: \\
			~~~~~ For direction $j=1,\ldots,d$:\\
			~~~~~~~ i). Generate the number of children $\Lambda_j\sim Poisson(\bar{h}_{d^i_{m,k}j})$;\\
			~~~~~~~ ii). If $\Lambda_j\neq0$, generate birth times $b^i_{m,n+1},\ldots,b^i_{m,n+\Lambda_j}\overset{i.i.d}{\sim}f_{d^i_{m,k},j}(t)$, $t\in[0,T-t^i_{m,k}]$.\\
			~~~~~~~ iii). $C^i_{m}=C^i_m\cup\{e^i_{m,n+s}=(n+s,j,k,t^i_{m,k}+b^i_{m,n+s}):s=1,\ldots,\Lambda_j\}.$\\
			~~~~~ Update $n=|C^i_m|$ and $k=k+1.$	\\  		
			\STATE 3. Return $C^i_m$.
		\end{algorithmic}
	\end{algorithm}
\end{itemize}
\begin{remark}
	The iteration in Algorithm \ref{alg: cluster} will stop in finite time almost surely as long as the Hawkes process is stationary (i.e. Assumption \ref{asp: stable} holds), and thus the corresponding branching process will become extinct in finite time.
\end{remark}
\subsection{Stationary Hawkes Process} For the multivariate Hawkes process with parameters $(\mathbf{\lambda_{0}}, \bar{h}, f(\cdot))$ to be stable in long term, intuitively, each cluster should contain a finite number of events on average. According to \cite{Hawkes1971}, we shall assume the following condition holds throughout the paper,
\begin{equation}\label{eq: stable}
\max\{|\lambda|:\lambda \text{ is the eigenvalue of } \bar{h}\}<1.
\end{equation} 
Indeed, we can directly construct a stationary Hawkes process using the cluster representation as follows. First, in each direction $i$, we extend the  homogeneous Poisson process $\{\tau_m^i\}$ of the immigrants, or equivalently, the cluster arrivals, to time interval $(-\infty, \infty)$. For this two-ended Poisson process, we index the sequence of immigrant arrival times by $\{\pm 1, \pm 2,...\}$ such that $\tau^i_{-1}\leq 0<\tau^i_{1}$ and generate the clusters $\{C^i_{\pm m }: i=1,\ldots,d;m=1,2,\ldots\}$ independently for each $m$ following the procedure in Definition \ref{def: Hawkes cluster}. Then, the events that arrive after time $0$ form a stationary sample path of a Hawkes process on $[0,\infty)$. In detail, for each $1\leq j\leq d$, let
\begin{equation*}%\label{eq: stable N}
\begin{aligned}
N_j(t) &\triangleq |\cup _{i=1}^d\cup _{m=-\infty}^{\infty }\{(k,d^i_{m,k},p^i_{m.k},t^i_{m,k}):k=1,2,...,|C^i_m|,d^i_{m,k}=j, 0\leq t^i_{m,k}\leq t\}|\\ %&=\sum_{i=1}^d\sum_{m=-\infty}^\infty (S^i_{m,j}(t-\tau^i_m)-S^i_{m,j}(-\tau^i_m)),\\
%&=\sum_{i=1}^d  |\cup _{m=-\infty}^{-1}\{e^i_{m,k}\in C^i_m:0\leq t^i_{m,k}\leq t\}| +   \sum_{i=1}^d|\cup _{m=1}^{\infty}\{e^i_{m,k}\in C^i_m:0\leq t^i_{m,k}\leq t\}| 
\end{aligned}
\end{equation*}
then $N(t)=(N_1(t),...,N_d(t))$ is a stationary multivariate Hawkes process.\\

Our goal is to simulate the stationary multivariate Hawkes process efficiently and we close this section by summarizing our technical assumptions on the model parameters $(\lambda_0, \bar{h}, f(\cdot)))$:
\begin{assumption}\label{asp: stable}
	The stability condition \eqref{eq: stable} holds.
\end{assumption}
\begin{assumption}\label{asp: light tail}
	There exists $\eta>0$ such that $\int_0^{+\infty}e^{\eta t}h_{ij}(t)dt<\infty$ for all $1\le i\le j\le d$.
\end{assumption}
\begin{assumption}\label{asp: tilting}
	For at least one $\eta>0$ such that Assumption~\ref{asp: light tail} holds, we can sample from the probability density function $p(t)\propto h_{ij}(t)e^{\eta t}$ for all $1\le i\le j\le d$.
\end{assumption}

	\begin{remark}
		Assumption 1 is natural. Assumption 2 basically requires that mutual-excitement effects are short-memory, i.e. the excitation function is exponentially bounded. Actually, this assumption is satisfied in most of the application papers cited in the Introduction. Assumption 3 is assumed to guarantee algorithm implementation. Since the main focus of our algorithm is on the design of the importance sampling and the acceptance-rejection scheme, we simply assume an oracle is available to simulate from the tilted distribution. In real application, users can resort to a variety of simulation methods to generate the tilted distribution.
\end{remark}

\section{Perfect Sampling of Hawkes Process}\label{sec: algorithm Hawkes}
Following \cite{Rasmussen} and \cite{Chen_2020}, we can decompose the counting process of the stationary Hawkes in each direction $N_j(t)$ as
\begin{equation}\label{eq: stable N}
\begin{aligned}
N_j(t) &\triangleq |\cup _{i=1}^d\cup _{m=-\infty}^{\infty }\{e^i_{m,k}=(k,d^i_{m,k},p^i_{m.k},t^i_{m,k}):k=1,2,...,|C^i_m|,d^i_{m,k}=j, 0\leq t^i_{m,k}\leq t\}|\\ %&=\sum_{i=1}^d\sum_{m=-\infty}^\infty (S^i_{m,j}(t-\tau^i_m)-S^i_{m,j}(-\tau^i_m)),\\
&=\sum_{i=1}^d  |\cup _{m=-\infty}^{-1}\{e^i_{m,k}\in C^i_m:d^i_{m,k}=j, 0\leq t^i_{m,k}\leq t\}| +   \sum_{i=1}^d|\cup _{m=1}^{\infty}\{e^i_{m,k}\in C^i_m:d^i_{m,k}=j,0\leq t^i_{m,k}\leq t\}| \\
&\triangleq N_{0,j}(t) + N_{1,j}(t),
\end{aligned}
\end{equation}
where $N_{0,j}(\cdot)$ are corresponding to those events that come from clusters that arrived before time $0$ and depart after time $0$, while $N_{1,j}(\cdot)$ are brought by immigrants arrive on $[0,T]$. Since the immigrants arrive on $[0,T]$ according to homogeneous Poisson processes, $\{N_{1,j}(\cdot):1\leq j\leq d \}$ can be simulated directly following the procedure described in Definition \ref{def: Hawkes cluster}. So the key step in the perfect sampling algorithm is to simulate the clusters that arrive before time $0$ and depart after time $0$. With slight abusing of notation, we shall denote the set of such clusters by $N_0$ and let $N_0 =\cup_{i=1}^d N_0^i$ where $N_0^i$ are the set of clusters brought by immigrants in direction $i$ before time 0 and lasting after time 0. 

For each $1\leq i\leq d$, let $p_i(t)=P(L^i_m>t)$ be the probability that a cluster brought by an immigrant in direction $i$ last for more than $t$ units of time. Then, by Poisson thinning theorem, the clusters in $N^i_0$ arrive on time horizon $(-\infty,0]$ according a in-homogeneous Poisson process with intensity function $\gamma_i(t) = \lambda_{0,i}p_i(-t)$, for $t\leq 0$. Besides, a cluster in $N^i_0$ should follow the conditional distribution of the branching process $C^i_m$ described in Definition  \ref{def: Hawkes cluster} on the event that $\{L_m^i>-\tau_m^i\}$.

Following Definition \ref{def: Hawkes cluster}, $N_0^1,\ldots,N_0^d$ are independent and can be simulated separately. In our new algorithm, we extend Algorithm 1 proposed in \cite{Chen_2020}  to sample, for each $i$, from the conditional distribution of $C_m^i$ given $L_m^i>t$ without evaluating the function $p_i(t)$.  Also, we shall improve the efficiency of simulating the conditional distribution of $C_m^i$ by importance sampling and acceptance-rejection methods. In particular, we define the \textit{total birth time} as
$$B^i_m\triangleq \sum_{k=2}^{|C_m^i|} b_{m,k}^i,$$
where $b_{m,k}^i$ is the birth time of event $k$ in cluster $C_m^i$ as defined in \eqref{eq: b}, and $b_{m,1}^i=0$ for the immigrant event. We design a change of measure based on exponential tilting of $C_m^i$ with respect to $B_m^i$ which is called \textit{importance distribution}. To do this, let us characterize such importance distribution by analyzing the cumulant generating function (c.g.f.) of $B_m^i$.

\begin{proposition}\label{prop: B}
	For all $i\in\{1,\ldots,d\}$, consider a cluster  $C_m^i$ in direction $i$ with parameter $(\bar{h},f(\cdot))$. Let $L_m^i$ and $B_m^i$ be its cluster length and total birth time, respectively. Then, the following statements are true:
	\begin{enumerate}[(1)]
		\item $B_m^i\geq L_m^i$ for all $1\leq i\leq d$.
		\item Under Assumption \ref{asp: light tail}, there exists $\theta_0>0$ such that for any $0<\theta<\theta_0$, the cumulant generating function (c.g.f.) of $B_m^i$ is well-defined for all $1\leq i\leq d$, i.e. $\psi_{B^i}(\theta)\triangleq\log E[\exp(\theta B_m^i)]<\infty$. Besides, for $0\leq \theta\leq \theta_0$, the vector $(\psi_{B^i}(\theta))_{i=1}^d$  satisfies a system of equations:
		\begin{equation}\label{eq: B cgf}
		\psi_{B^i}(\theta)=\sum_{j=1}^{d}\bar{h}_{ij} \left( \exp(\psi_{f_{ij}}(\theta)+\psi_{B^j}(\theta))-1\right) , \forall 1\leq i\leq d,
		\end{equation}
		with $\psi_{f_{ij}}(\theta) = \log(\int_0^\infty e^{\theta t}f_{ij}(t)dt)$ being the c.g.f. of the birth time $b^i_j$ as defined in \eqref{eq: b} for all $1\leq i, j\leq d$.
		\item Let $\mathbb{P}$ be the probability distribution of a cluster $C_m^i$. Let $\mathbb{Q}$ be the importance distribution of the cluster under exponential tilting by parameter $0<\eta<\theta_0$ with respect to the total birth time $B_m^i$, i.e.
		$$d\mathbb{Q}(C_m^i) =\exp(\eta B_m^i-\psi_{B^i}(\eta))\cdot d\mathbb{P}(C_m^i).$$
		Then, sampling a cluster from the importance distribution $\mathbb{Q}$ is equivalent to sampling a cluster in direction $i$ with parameter $(\bar{h}_{lj}\exp(\psi_{f_{lj}}(\eta)+\psi_{B^j}(\eta)), f_{lj,\eta}(\cdot))_{1\le l,j\le d}$ with $f_{lj,\eta}(t) = f_{lj}(t) \cdot\exp(\eta t - \psi_{f_{lj}}(\eta))$.
	\end{enumerate} 
\end{proposition}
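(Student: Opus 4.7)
\textbf{Proof proposal for Proposition \ref{prop: B}.}

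\textbf{Part (1).} For this I would trace back along the ancestral chain. Every non-immigrant event $e^i_{m,k}$ has an arrival time $t^i_{m,k}-\tau^i_m$ (relative to the immigrant) equal to the sum of the birth times of its ancestors in the cluster (immigrant-to-parent, parent-to-grandchild, etc.). Since all birth times are non-negative and the ancestral chain of any single event uses each birth time at most once, this sum is bounded above by the total birth time $B^i_m$. Taking the maximum over $k$ yields $L^i_m\le B^i_m$.

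\textbf{Part (2).} I would derive \eqref{eq: B cgf} from the branching decomposition and then invoke the implicit function theorem for existence of $\theta_0$. Conditioning on the immigrant being in direction $i$, decompose
\[
B^i_m \;=\; \sum_{j=1}^d\sum_{k=1}^{\Lambda_j}\bigl(b_k^{(j)}+\widetilde B^j_k\bigr),
\]
where $\Lambda_j\sim\mathrm{Poisson}(\bar h_{ij})$, the $b_k^{(j)}$ are i.i.d.\ $f_{ij}$, and $\widetilde B^j_k$ are i.i.d.\ copies of total birth times of clusters rooted at direction $j$, all mutually independent. Conditioning on the $\Lambda_j$'s and using the Poisson moment generating function
\[
E\bigl[e^{\theta B^i_m}\bigr]\;=\;\prod_{j=1}^d\exp\!\Bigl(\bar h_{ij}\bigl(e^{\psi_{f_{ij}}(\theta)+\psi_{B^j}(\theta)}-1\bigr)\Bigr),
\]
which, after taking logarithms, is exactly \eqref{eq: B cgf}. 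For the existence of $\theta_0>0$, define
\[
F_i(\theta,\psi)\;\triangleq\;\psi_i-\sum_{j=1}^d \bar h_{ij}\bigl(e^{\psi_{f_{ij}}(\theta)+\psi_j}-1\bigr),\qquad i=1,\ldots,d.
\]
By Assumption \ref{asp: light tail}, $\psi_{f_{ij}}(\theta)$ is smooth in $\theta$ on a neighborhood of $0$, and $F(0,0)=0$. The Jacobian $\partial F/\partial\psi$ at $(0,0)$ equals $I-\bar h$, which is invertible by Assumption \ref{asp: stable} (spectral radius $<1$). The implicit function theorem then yields a smooth solution $\theta\mapsto (\psi_{B^i}(\theta))_{i=1}^d$ on some interval $[0,\theta_0)$, and standard moment-generating-function arguments (applying the recursion to $E[e^{\theta B^i_m}]$, which is monotone in $\theta$) identify this solution with the actual c.g.f. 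The main obstacle here is justifying that the solution obtained from the implicit function theorem is indeed the c.g.f.\ and not some spurious root; I would do this by showing that $E[e^{\theta B^i_m}]$ is continuous in $\theta$ on $[0,\theta_0)$ via dominated convergence, equals $1$ at $\theta=0$, and satisfies the same recursion, so by uniqueness of the local solution the two agree.

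\textbf{Part (3).} I would show the change of measure factorizes across the branching tree. Under $\mathbb{P}$, the joint law of a cluster factorizes over events into (i) Poisson offspring counts, (ii) i.i.d.\ birth times, and (iii) independent subtrees. Since $B^i_m$ is a sum of all non-immigrant birth times, the Radon--Nikodym factor $\exp(\eta B^i_m-\psi_{B^i}(\eta))$ splits multiplicatively along subtrees rooted at each offspring of the immigrant. Applying the branching recursion to the denominator $e^{\psi_{B^i}(\eta)}$ and integrating out subtree contributions as in part (2), the tilting factor for an event in direction $l$ with respect to its direction-$j$ offspring becomes
\[
\frac{e^{\bar h_{lj}(e^{\psi_{f_{lj}}(\eta)+\psi_{B^j}(\eta)}-1)}\!\cdot\! \prod_k e^{\eta b_k}\!\cdot\! \prod_k e^{\eta \widetilde B^j_k}}{\text{tilted normalizations for children's subtrees}},
\]
so conditional on the parent, the offspring count in direction $j$ is a $\mathrm{Poisson}\bigl(\bar h_{lj}e^{\psi_{f_{lj}}(\eta)+\psi_{B^j}(\eta)}\bigr)$, each birth time has density $f_{lj}(t)e^{\eta t-\psi_{f_{lj}}(\eta)}=f_{lj,\eta}(t)$, and each subtree inherits the same tilted dynamics. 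Recursion over generations identifies $\mathbb{Q}$ with the law of a cluster having parameters $(\bar h_{lj}e^{\psi_{f_{lj}}(\eta)+\psi_{B^j}(\eta)},f_{lj,\eta}(\cdot))_{l,j}$, which is the claim.
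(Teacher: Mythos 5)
Your parts (1) and (3) are sound. Part (1) is the same tree argument as the paper's (the distance from the root to any node is a sum of distinct edge lengths, hence at most the total edge length $B^i_m$). Part (3) actually takes a more complete route than the paper: the paper only verifies that the c.g.f.\ of $B^i_m$ under $\mathbb{Q}$ coincides with the c.g.f.\ of the total birth time of a cluster with the tilted parameters, which strictly speaking pins down only the law of the one-dimensional functional $B^i_m$, not the law of the whole cluster; your factorization of the likelihood ratio $\exp(\eta B^i_m-\psi_{B^i}(\eta))$ over the branching tree identifies the full law of $\mathbb{Q}$ and is the argument one would want to see written out. The derivation of the recursion \eqref{eq: B cgf} in part (2) by conditioning on the immigrant's offspring is also exactly the paper's computation.

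The gap is in your existence argument for $\theta_0$ in part (2). The implicit function theorem (with Jacobian $I-\bar h$ at the origin) produces a finite, smooth solution of the fixed-point system near $\theta=0$, but your proposed identification of that solution with the actual c.g.f.\ is circular: dominated convergence for continuity of $\theta\mapsto E[e^{\theta B^i_m}]$ requires an integrable dominating envelope, i.e.\ finiteness of $E[e^{\theta' B^i_m}]$ for some $\theta'>\theta$, which is precisely what is to be proved. Moreover, the recursion holds in $[0,\infty]$-valued arithmetic and $+\infty$ is always a fixed point, so ``equals $1$ at $0$ and satisfies the recursion'' does not exclude $E[e^{\theta B^i_m}]=+\infty$ for every $\theta>0$; uniqueness from the IFT applies only among finite continuous solutions, which begs the question. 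Two ways to close this: (a) the paper's route, which stochastically dominates $B^i_m$ by $\sum_{k=2}^{|C^i_m|}\tilde b_k$ with $\tilde b_k$ i.i.d.\ and independent of the cluster size, and invokes the known finiteness of the m.g.f.\ of $|C^i_m|$ in a neighbourhood of $0$ under the spectral-radius condition; or (b) a truncation argument consistent with your IFT setup: let $B^{i,(n)}$ be the total birth time restricted to the first $n$ generations, show by induction (using the monotonicity of the recursion map) that $E[e^{\theta B^{i,(n)}}]\le e^{\psi_i(\theta)}$ for the IFT solution $\psi$, and let $n\to\infty$ by monotone convergence. Either repair makes your part (2) complete.
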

\begin{proof}[Proof of Proposition \ref{prop: B}]
	
	We prove the three statements of Proposition \ref{prop: B} one by one.
	
	(1). Given $i$, to see $B_m^i\geq L_m^i$, it is intuitive to represent the cluster as a tree ignoring the direction of each event. Let the immigrant event $\tau_m^i$ be the root node and link each event $t^i_{m,k}$ to its parent event by an edge of length $b^i_{m,k}$. Then $B_m^i$ is equal to the total length of all edges in the tree while $L_m^i$ is equal to the length of the longest path(s) from the root node to a leaf node. Therefore, $B_m^i\geq L_m^i$.
	
	(2). By definition, for each $i$, $B_m^i=\sum_{k=2}^{|C_m^i|}b_{m,k}^i$, where $b_{m,k}^i$ follows probability density $f_{lj}(t)$ for some $1\leq k,j\leq d$. 
	Let 
	$$\tilde{b}_{k} = \sum_{1\leq l, j\leq d} \tilde{b}_{ij, k}, \text{ with }\tilde{b}_{lj, k}\stackrel{i.i.d.}{\sim } f_{lj}(t), \forall 1\leq l, j\leq d.$$ 
	Then we have 
	$$E[\exp(\theta B_m^i)]\leq E\left[\exp\left(\theta\sum_{k=2}^{|C_m^i|}\tilde{b}_k\right)\right].$$
	Note that $\sum_{k=2}^{|C_m^i|}\tilde{b}_k$ is  the sum of $|C_m^i|-1$ i.i.d. random variables. On the one hand, it is known in literature \cite{MultiHawkesMGF} that the c.g.f. of $E[\exp(\theta |C_m^i|)]$ is finite in a neighbourhood around $0$. On the other hand, following Assumption \ref{asp: light tail}, $$E[\exp(\theta \tilde{b}_k)]= \prod_{1\leq l,j\leq d}\int_0^\infty e^{\theta t}f_{ij}(t)dt<\infty, \forall \theta\leq \eta.$$
	As a consequence, we can conclude that $E\left[\exp\left(\theta\sum_{k=2}^{|C_m^i|}\tilde{b}_k\right)\right]$ is finite in a neighbourhood around the origin and so is $E[\exp(\theta B_m^i)]$. As this is true for all $i$, we can find $\theta_0>0$ such that the c.g.f.'s of $B_m^i$ for all $i$ are finite for all $0<\theta<\theta_0$.
	
	Now, for fixed $\theta\in (0,\theta_0)$, let's characterize $(\psi_{B^i}(\theta))_{i=1}^d$. For a cluster in direction $i$, we denote by $K^i_j$ the number of direct offsprings in direction $j$ brought by the immigrant in direction $i$. Then, $K^i_j$ is a Poisson random variable with mean $\bar{h}_{ij}$. Based on the self-similarity property of the branching process, a direct child event in direction $j$ of the immigrant will bring a sub-cluster following the same distribution as a cluster $C^j$, and thus the total birth time of this sub-cluster follows the same distribution as $B^j$. For all $1\leq i, j\leq d$, let $b^i_j(k)$ be i.i.d. copies of birth time following the probability density $f_{ij}(t)$ and $B^j(k)$ be i.i.d. copies of total birth time $B^j$, for $k=1,2,...,K^i_j$. Then, we have
	\begin{equation*}\label{eq: prop B}
	\begin{aligned}
	\psi_{B^i}(\theta) &= \log(E[\exp(\theta B^i_m)]) = \log\left(E\left[\exp\left(\theta\sum_{k=1}^{|C_m^i|} b^i_{m,k}\right)\right]\right)\\
	&= \log\left(E\left[E\left[\exp\left(\theta\sum_{j=1}^d\sum_{k=1}^{K^i_j} \left(b^i_j(k) +B^{j}(k)\right)\right)| K^i_1,K^i_2,\ldots,K^i_d \right]\right]\right)\\
	&= \log\left(E\left[\exp\left( \sum_{j=1}^dK^i_j(\psi_{f_{ij}}(\theta) + \psi_{B^j}(\theta)) \right)\right]\right)\\
	&=\sum_{j=1}^d \bar{h}_{ij} [\exp(\psi_{f_{ij}}(\theta) + \psi_{B^j}(\theta))-1].
	\end{aligned}
	\end{equation*}
	
	(3). Under $\mathbb{Q}$, the c.g.f. of $B^i_m$ becomes $\psi_{B^i,\eta}(\theta)=\psi_{B^i}(\theta+\eta)-\psi_{B^i}(\eta)$. We can compute,
	\begin{equation*}
	\begin{aligned}
	\psi_{B^i,\eta}(\theta)
	&=\sum_{j=1}^d \bar{h}_{ij} [\exp(\psi_{f_{ij}}(\theta+\eta) + \psi_{B^j}(\theta+\eta))-1]-\sum_{j=1}^d \bar{h}_{ij} [\exp(\psi_{f_{ij}}(\eta) + \psi_{B^j}(\eta))-1]\\
	&=\sum_{j=1}^d \bar{h}_{ij} [\exp(\psi_{f_{ij}}(\theta+\eta) + \psi_{B^j}(\theta+\eta))-\exp(\psi_{f_{ij}}(\eta) + \psi_{B^j}(\eta))]\\
	&=\sum_{j=1}^d \bar{h}_{ij} \exp(\psi_{f_{ij}}(\eta) + \psi_{B^j}(\eta))[\exp(\psi_{f_{ij},\eta}(\theta) + \psi_{B^j,\eta}(\theta))-1]\\
	\end{aligned}
	\end{equation*}
	with $\psi_{f_{ij},\eta}(\theta)=\psi_{f_{ij}}(\eta+\theta)-\psi_{f_{ij}}(\eta)$ be the c.g.f.  corresponding to probability density function $f_{ij,\eta}(\cdot)$. The above calculation shows that $\psi_{B,\eta}(\theta)$ equals exactly to the c.g.f. of the total birth time  of a cluster with parameter $(\bar{h}_{ij}\exp(\psi_{f_{ij}}(\eta)+\psi_{B^j}(\eta)), f_{ij,\eta}(\cdot))_{j=1,\ldots,d}$. Therefore, to simulate the exponentially tilted distribution of $C^i$ with parameter $\eta$ is equivalent  to generate a cluster such that number of direct child events in direction $j$ of an event in direction $i$ is a Poisson with mean $\bar{h}_{ij}\exp(\psi_{f_{ij}}(\eta)+\psi_{B^j}(\eta))$, and their birth time following the density function $f_{ij,\eta}(t)$.
\end{proof}\\

Given Proposition \ref{prop: B}, we know how to simulate the importance distribution of $C_m^i$. Now we provide our perfect sampling algorithm as follows.\\

\begin{algorithm}[H]
	\caption{Simulating $N_0$}
	\label{alg: hawkes}
	\begin{algorithmic}
		\REQUIRE parameters of the Hawkes Process $(\lambda_{0}, \bar{h}, f(\cdot))$, a positive vector $\eta=(\eta_1, \eta_2, ..., \eta_d)\in[0,\theta_0]^d$
		\ENSURE  $N_0 =\{C_m^i: m\leq -1, L_m^i>-\tau_m^i, i=1,\ldots,d\}$ \\
		%\STATE 1. Compute $\psi_{f_{ij}}(\eta_i)$ and solve $\psi_{B^i}(\eta_i)$ from Equation \eqref{eq: B cgf}, $1\leq i,j\leq d$.
		\STATE 1. Initialize $N_0=\{\}$.
		\STATE 2. For direction $i=1,\ldots,d$:\\
		\begin{enumerate}[1).]
			\item Compute $\psi_{f_{lj}}(\eta_i)$ and $\psi_{B^j}(\eta_i)$ from Equation \eqref{eq: B cgf}, for $1\leq l,j\leq d$
			\item generate an in-homogeneous Poisson process on $(-\infty, 0]$ with rate function $$\tilde{\gamma}_i(t) = \lambda_{0,i}\exp(\psi_{B^i}(\eta_i)+\eta_i t),~ t\leq 0,$$
			~and obtain arrivals $\{\tau^i_1,...,\tau^i_{K}\}$.
			\item For $m = 1,2,...,K$:
			\begin{enumerate}[a).]
				\item Generate a cluster $C_m^i$ with parameter $(\bar{h}_{lj}\exp(\psi_{f_{lj}}(\eta_i)+\psi_{B^j}(\eta_i)), f_{lj,\eta_i}(\cdot))_{1\leq l, j\leq d}$ by Algorithm\ref{alg: cluster} and $U_m\sim U[0,1]$.
				\item Accept $C_m^i$ and update $N_0 = N_0 \cup \{C_m^i\}$ if both of the following conditions are satisfied: 
				\begin{enumerate}[i.]
					\item $L_m^i>-\tau_m^i$,
					\item $U_m\leq \exp(-\eta_i(B_m^i+\tau_m^i))$.
				\end{enumerate}
			\end{enumerate}
			
			%: {\color{red}Initialize $k=1$, $C_m^i=\{(1,i,0,\tau^i_m)\}$ and $n=|C^i_m|$. While($n\ge k$)\{generate the direct children of $e^i_{m,k}=(k,d^i_{m,k},p^i_{m,k},t^i_{m,k})\in C^i_{m}$ as follows. For direction $j=1,\ldots,d$:\\
			%	~~~~~~i). Generate the number of children $\Lambda_j\sim Poisson(\bar{h}_{d^i_{m,k}j}\exp(\psi_{f_{d^i_{m,k}j}}(\eta_i)+\psi_{B^j}(\eta_i))$;\\
			%	~~~~~~ii). If $\Lambda_j\neq0$, generate birth times $b^i_{m,n+1},\ldots,b^i_{m,n+\Lambda_j}\overset{i.i.d}{\sim}f_{d^i_{m,k},j}(t^i_{m,k})$.\\
			%	~~~~~~iii). $C^i_{m}=C^i_m\cup\{e^i_{m,n+s}=(n+s,j,k,t^i_{m,k}+b^i_{m,n+s}):s=1,\ldots,\Lambda_j\}.$\\
			%~~~	Update $n=|C^i_m|;~ k=k+1.$\}	\\    	
			%	} 
		\end{enumerate}
		
		\STATE 3. Return $N_0$.
	\end{algorithmic}
\end{algorithm}

Algorithm \ref{alg: hawkes} actually contains two importance-sampling steps. When simulating the immigrants, i.e. the arrivals of clusters, it simulates an in-homogeneous Poisson with a larger intensity function $\tilde{\gamma}_i(t)\geq \gamma_i(t)$. Now we show that the output of Algorithm \ref{alg: hawkes} follows exactly the distribution of $N_0$, and provide an explicit expression of algorithm complexity as a function of model and algorithm parameters.

\begin{theorem}
	\label{thm1}
	The list of clusters generated by Algorithm \ref{alg: hawkes} exactly follows the distribution of $N_0$. In particular, for each direction $i\in\{1,\ldots,d\}$,
	\begin{enumerate}[(1)]
		\item  The arrival times of the clusters follow an in-homogeneous Poisson process  with intensity $\gamma_i(t)=\lambda_{0,i}p_i(-t)$ for $t\in(-\infty,0]$. %The acceptance rate of a sample at time $-t$ equals to $\lambda(t)^{-1}P(K>t)$.
		\item For each cluster $C_m^i$ in the list, given its arrival time $\tau_m^i$, it follows the conditional distribution of a cluster given that the cluster length $L^i_m>-\tau_m^i$.   %The accepted sample at time $-t$ follows the condition distribution of $K$ on $\{K>t\}$.
	\end{enumerate}
	Besides, the expected total number of random variables generated by Algorithm \ref{alg: hawkes} before termination is a \textit{convex} function in $\eta\in[0,\theta_0]^d$ with explicit expression as follows.
	\begin{equation}\label{eq: complexity}
	X(\eta)=\sum_{i=1}^d \frac{\lambda_{0i}\exp(\psi_{B^i}(\eta_i))}{\eta_i}(1+\tilde{S}_i(\eta_i)), 
	%\mathbf{\tilde{\lambda}_{0}}^T[\mathbf{1}+(I-\tilde{h})^{-1}\mathbf{1}]/\eta,
	\end{equation} 
	where $\tilde{S}_i(\eta) $ is the $i$-th row sum of the matrix $\tilde{S}(\eta) = (I-\tilde{h}(\eta))^{-1}$ and $\tilde{h}(\eta)=[\tilde{h}_{ij}(\eta)]_{1\leq i,j\leq d}$  with $\tilde{h}_{ij}(\eta)=\bar{h}_{ij}\exp(\psi_{f_{ij}}(\eta)+\psi_{B^j}(\eta))$ for all $1\leq i, j\leq d$. 
	%where $\mathbf{\tilde{\lambda}_{0}}\in\mathbb{R}^d$ is vector with $i$-th element equals to $\lambda_{0,i}\exp(\psi_{B^i}(\eta))$, $\mathbf{1}\in\mathbb{R}^d$ with all elements equal to 1, $\tilde{h}=[\tilde{h}_{ij}]_{1\leq i,j\leq d}$ is a $d\times d$ matrix with $\tilde{h}_{ij}=\bar{h}_{ij}\exp(\psi_{f_{ij}}(\eta)+\psi_{B^j}(\eta))$.    
\end{theorem}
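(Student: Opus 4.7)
The plan is to decouple the two assertions. For parts (1)--(2), the argument is a marked-Poisson-thinning calculation; for the complexity formula, it is a Wald-type count combined with the change of measure in Proposition \ref{prop: B}(3); and the convexity is a log-convexity argument built from the convexity of c.g.f.'s.

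I would first fix a direction $i$ and view Step 2 of Algorithm \ref{alg: hawkes} as generating a marked Poisson process on $(-\infty,0]\times\mathcal{C}$ with intensity measure $\tilde\gamma_i(t)\,dt\times d\mathbb{Q}(C)$, then independently thinning each atom by the acceptance indicator $\mathbf{1}\{L(C)>-t\}\exp(-\eta_i(B(C)+t))\in[0,1]$. Standard Poisson thinning says the retained atoms form a Poisson process whose intensity measure is the product of the base intensity and the acceptance probability. Plugging in $\tilde\gamma_i(t)=\lambda_{0,i}\exp(\psi_{B^i}(\eta_i)+\eta_i t)$ together with $d\mathbb{Q}(C)=\exp(\eta_i B(C)-\psi_{B^i}(\eta_i))\,d\mathbb{P}(C)$, the retained intensity measure collapses to
\begin{equation*}
\lambda_{0,i}\,\mathbf{1}\{L(C)>-t\}\,dt\times d\mathbb{P}(C).
\end{equation*}
Marginalizing out $C$ yields ground intensity $\lambda_{0,i}\,\mathbb{P}(L_m^i>-t)=\gamma_i(t)$, which is part (1), while conditioning on the arrival time $t$ returns exactly the $\mathbb{P}$-distribution of $C$ restricted to $\{L>-t\}$ and renormalized, which is part (2). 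Independence across $i$ follows because the $d$ outer-loop streams are run with independent randomness.

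For the complexity formula, I would compute the expected random-variable count by direction. The expected count of candidate immigrants in direction $i$ is
\begin{equation*}
\int_{-\infty}^0\tilde\gamma_i(t)\,dt=\frac{\lambda_{0,i}\exp(\psi_{B^i}(\eta_i))}{\eta_i}.
\end{equation*}
Per candidate immigrant, the algorithm draws one acceptance/arrival bookkeeping variable (the ``$1$'' term) and invokes Algorithm \ref{alg: cluster} under the tilted parameters $(\tilde h_{lj}(\eta_i),f_{lj,\eta_i})$. By Proposition \ref{prop: B}(3), this cluster is a multivariate branching process with offspring-mean matrix $\tilde h(\eta_i)$; finiteness of $\psi_{B^i}$ on $[0,\theta_0]$ forces the spectral radius of $\tilde h(\eta_i)$ to be strictly below one, so the Neumann series $(I-\tilde h(\eta_i))^{-1}=\sum_{n\ge0}\tilde h(\eta_i)^n$ converges and the standard first-moment recursion for multi-type branching processes gives expected total cluster size equal to the $i$-th row sum $\tilde S_i(\eta_i)$. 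Multiplying and summing over $i$ delivers \eqref{eq: complexity}.

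Convexity I would establish through log-convexity, which is preserved under products, sums, and pointwise limits (the sum step using H\"older's inequality). The function $\log\Lambda_i(\eta_i)=\log\lambda_{0,i}+\psi_{B^i}(\eta_i)-\log\eta_i$ is convex on $(0,\theta_0]$ since $\psi_{B^i}$ is a c.g.f.\ and $-\log$ is convex, so $\Lambda_i$ is log-convex. Each entry $\tilde h_{lj}(\eta)=\bar h_{lj}\exp(\psi_{f_{lj}}(\eta)+\psi_{B^j}(\eta))$ is log-convex (its log is a sum of convex c.g.f.'s), hence every entry of every power $\tilde h(\eta_i)^n$ is log-convex, hence $\tilde S_i(\eta_i)$ and $1+\tilde S_i(\eta_i)$ inherit log-convexity. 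The product $\Lambda_i(1+\tilde S_i)$ is then log-convex, and the sum over $i$ of log-convex functions of separate coordinates is log-convex on the product domain and a fortiori convex, finishing the last claim. The main obstacle is pushing log-convexity through the infinite Neumann series defining $\tilde S_i$, which relies on the H\"older-based fact that a sum (and pointwise limit) of log-convex functions is log-convex; a secondary technical point is checking that the tilted offspring matrix remains subcritical throughout $[0,\theta_0]$, which again follows from Proposition \ref{prop: B}(2).
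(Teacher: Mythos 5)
Your treatment of parts (1)--(2) and of the complexity count is essentially the paper's argument in different packaging: the paper computes the acceptance probability $\mathbb{P}(L_m^i>-\tau_m^i)\exp(-\eta\tau_m^i)/\exp(\psi_{B^i}(\eta))$ directly under $\mathbb{Q}$ and invokes Poisson thinning, and it derives the expected cluster size from the same first-moment recursion $S_{lj}=\mathbbm{1}(l=j)+\sum_k\tilde h_{lk}(\eta_i)S_{kj}$ that underlies your Neumann-series statement; your marked-point-process phrasing is just a cleaner way of saying the same thing. Where you genuinely diverge is the convexity proof. The paper expands $\tilde S_i(\eta_i)$ as the series $\sum_m\sum_{k_1,\dots,k_m}(g*\tilde h_{ik_1}*\cdots*\tilde h_{k_{m-1}k_m})(\eta_i)$, differentiates each summand twice, shows each second derivative is positive using convexity of the c.g.f.'s, and then spends real effort bounding $\sum_m u^m(a^2m^2+bm+2)$ to justify interchanging $\partial^2/\partial\eta_i^2$ with the infinite sum. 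Your log-convexity route reaches the same conclusion by pure closure properties: each factor is log-convex because its logarithm is a sum of c.g.f.'s plus $-\log\eta$, log-convexity survives products, finite sums (via H\"older), and pointwise limits, so no term-by-term differentiation or interchange lemma is needed. This is shorter and more robust (it even yields strict convexity from the strictly convex $-\log\eta$ factor if one wants the uniqueness of the minimizer claimed in the introduction). The one point to state carefully, which both you and the paper treat at the same level of rigor, is why the tilted offspring matrix $\tilde h(\eta_i)$ stays subcritical for $\eta_i<\theta_0$ so that the Neumann series converges; your appeal to Proposition \ref{prop: B}(2) matches the paper's appeal to finiteness of the m.g.f.\ under the importance distribution.
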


\begin{proof}[Proof of Theorem \ref{thm1}] To prove Statement (1) for each direction $i$,
	by Poisson thinning theorem, it suffices to show that, for each $m$, the acceptance probability of the cluster $C_m^i$ equals to $$\frac{\gamma_i(\tau_m^i)}{\tilde{\gamma}_i(\tau_m^i)} = \mathbb{P}(L_m^i>-\tau_m^i)\exp(-\eta\tau_m^i)/\exp(\psi_{B^i}(\eta)).$$ According to Proposition \ref{prop: B}, the importance distribution $\mathbb{Q}$ and the target distribution $\mathbb{P}$ satisfies 
	$$d\mathbb{Q}(B_m^i = x) = d\mathbb{P}(B_m^i=x)\cdot\frac{\exp(\eta x)}{\exp(\psi_{B^i}(\eta))} .$$
	Therefore, the probability for cluster $C_m^i$ to be accepted in Step 3  is
	\begin{equation*}
	\begin{aligned}
	&E_\mathbb{Q}\left[1\left(L^i_m>-\tau_m\text{ and }U_m<\exp(-\eta (B_m^i+ \tau_m^i))\right)\right]\\
	=~& \int  1(L_m^i>-\tau_m^i) \exp(-\eta (B^i_m+ \tau_m^i))d\mathbb{Q}\\ 
	=~& \int  1(L_m^i>-\tau_m^i) \exp(-\eta (B^i_m+ \tau_m^i)) d\mathbb{P}\cdot\frac{\exp(\eta B^i_m)}{\exp(\psi_{B^i}(\eta))} \\
	= ~&\int  \frac{\exp(-\eta\tau_m^i) }{\exp(\psi_{B^i}(\eta))}1(L_m^i>-\tau_m^i) d\mathbb{P}\\
	=~&\mathbb{P}(L_m^i>-\tau_m^i)\exp(-\eta\tau_m^i)/\exp(\psi_{B^i}(\eta)).
	\end{aligned}
	\end{equation*}
	Therefore, we obtain Statement (1).
	
	From the above calculation, we can also see that, given $m$ and $\tau_m^i$, and any event $A\in \sigma(C_m^i)$, the joint probability
	\begin{align*}
	P(C_m^i\in A, C_m^i\text{ is accepted})& = \int  \frac{\exp(-\eta\tau_m^i) }{\exp(\psi_{B^i}(\eta))}1(C_m^i\in A, L_m^i>-\tau_m^i) d\mathbb{P}\\
	& \propto \mathbb{P}(L_m^i>-\tau_m^i, C_m^i\in A).
	\end{align*}
	Therefore, the accepted sample of $C_m^i$ indeed follows the conditional distribution of $C_m^i$ given $\{L_m^i> -\tau_m^i\}$, and we obtain Statement (2).
	
	To check \eqref{eq: complexity}, we first note that the expected total number of random variables generated by Algorithm \ref{alg: hawkes} can be expressed as $\sum_{i=1}^d M_i S_i$, where $M_i$ is the expected total number of clusters in direction $i$,  and $S_i$ is the average number of events in each cluster. Following Algorithm \ref{alg: hawkes}, the number of those clusters in direction $i$ is a Poisson random variable with mean 
	$$\int_0^\infty \tilde{\gamma}_i(-t)dt = \frac{\lambda_{0,i}\exp(\psi_{B^i}(\eta_i))}{\eta_i}.$$
	Now we compute the average number of events one cluster brought by an immigrant in direction $i$. Note that in Algorithm \ref{alg: hawkes}, we use importance distribution of parameters $(\bar{h}_{lj}\exp(\psi_{f_{lj}}(\eta_i)+\psi_{B^j}(\eta_i)), f_{lj,\eta_i}(\cdot))_{1\leq l, j\leq d}$ to generate the clusters in direction $i$. Let $S_{lj}$ be the expected number of events in direction $j$  brought by an immigrant in direction $k$ under this set of parameters. Then, by the self-similarity of the branching process, we have
	%	\begin{equation}\label{eq:Si}
	%	S_i=1+\sum_{j=1}^d\tilde{h}_{ij}S_j, ~i=1,\ldots,d,
	%	\end{equation}
	\begin{equation}\label{eq:Sij}
	S_{lj}=\mathbbm{1}(l=j)+\sum_{k=1}^d\tilde{h}_{lk}(\eta_i)S_{kj}, ~\forall 1\leq l, j\leq d,
	\end{equation}
	with $\tilde{h}_{lk}(\eta_i)=\bar{h}_{lk}\exp(\psi_{f_{lk}}(\eta_i)+\psi_{B^k}(\eta_i))$. 
	As $\eta_i<\theta_0$,  the moment generating function of the clusters under the importance distribution is finite in a neighbourhood around 0 and the corresponding branching processes are all stable. Then the determinant $|\tilde{h}(\eta_i)-I|\neq 0$, so the inverse matrix $\tilde{S}(\eta_i)=(I-\tilde{h}(\eta_i))^{-1}$ is well defined. Therefore, 
	$$S_i =\tilde{S}_i(\eta_i)\triangleq\sum_{j=1}^d \tilde{S}_{ij}(\eta_i) \text{ is the row sum of the matrix} (I-\tilde{h}(\eta_i))^{-1}.$$
	As a consequence, the expected total number of  random variables is 
	$$\sum_{i=1}^d M_i S_i=\sum_{i=1}^d \frac{\lambda_{0i}\exp(\psi_{B^i}(\eta_i))}{\eta_i}(1+\tilde{S}_i(\eta_i)).$$
	% Besides, for each cluster, the algorithm also need to simulate an extra random number in the acceptance-rejection step. Therefore, the expected total number of  random variables is 
	%\begin{align*}
	%\sum_{i=1}^{d}\lambda_{0,i}\exp(\psi_{B^i}(\eta))/\eta\cdot\left(1+\sum_{j=1}^dS_{ij}\right)
	%=\mathbf{\tilde{\lambda}_{0}}^T[\mathbf{1}+(I-\tilde{h})^{-1}\mathbf{1}]/\eta,
	%\end{align*} 
	%which closes the proof.
	
	Finally, what remains to be proved is that the function $X(\eta)=\sum_{i=1}^d \frac{\lambda_{0i}\exp(\psi_{B^i}(\eta_i))}{\eta_i}(1+\tilde{S}_i(\eta_i))$ is convex. This suffices to show that for each $1\leq i\leq d$,  $x(\eta_i)\triangleq\frac{\lambda_{0i}\exp(\psi_{B^i}(\eta_i))}{\eta_i}(1+\tilde{S}_i(\eta_i))$ is convex. In the rest of proof, for convenience, we write $k_0=i$,  $g(\eta_i)\triangleq\lambda_{0i}\exp(\psi_{B^i}(\eta_i))/\eta_i$ and denote the product of functions $f(x)g(x)$ as $(f*g)(x)$. As the eigenvalue of the matrix $\tilde{h}(\eta_i)<1$, we can write,
	\begin{align*}
	x(\eta_i)&=g(\eta_i)\left(1+\sum_{j=1}^d \tilde{S}_{ij}(\eta_i)\right)%=g(\eta_i)+\sum_{j=1}^dg(\eta_i)\left( \mathbbm{1}(i=j)+\sum_{k_1=1}^d\tilde{h}_{ik_1}(\eta_i)S_{k_1j}\right)\\
	%&= 2g(\eta_i)+\sum_{j=1}^d\left(\sum_{k_1=1}^d(g*\tilde{h}_{ik_1})(\eta_i)\mathbbm{1}(k_1=j)+\sum_{k_1=1}^d\sum_{k_2=1}^d(g*\tilde{h}_{ik_1}*\tilde{h}_{k_1k_2})(\eta_i)\mathbbm{1}(k_2=j)+\ldots\right)\\
	%&= 2g(\eta_i)+\sum_{k_1=1}^d(g*\tilde{h}_{ik_1})(\eta_i)+\sum_{k_1=1}^d\sum_{k_2=1}^d(g*\tilde{h}_{ik_1}*\tilde{h}_{k_1k_2})(\eta_i)+\ldots\\
	=2g(\eta_i)+\sum_{m=1}^{\infty}\sum_{k_1=1}^d\sum_{k_2=1}^d\ldots\sum_{k_m=1}^d(g*\tilde{h}_{ik_1}*\ldots*\tilde{h}_{k_{m-1}k_m})(\eta_i).
	\end{align*}
	For any $m\ge 1$, let $a_m(\eta_i)\triangleq\sum_{l=0}^m\psi_{B^{k_l}}(\eta_i)+\sum_{l=1}^m\psi_{f_{k_{l-1}k_l}}(\eta_i)$. Then, we have
	\begin{equation*}
	\begin{aligned}
	\dfrac{\partial}{\partial\eta_i}(g*\tilde{h}_{ik_1}*\ldots*\tilde{h}_{k_{m-1}k_m})(\eta_i)
	=&\tilde{h}_{ik_1}\tilde{h}_{k_1k_2}\ldots\tilde{h}_{k_{m-1}k_m}\dfrac{\exp(\psi_{B^i}(\eta_i))}{\eta_i^2} \left(   \eta_ia_m'(\eta_i) -1\right),
	\end{aligned}
	\end{equation*}
	and 
	\begin{align*}
	\dfrac{\partial^2}{\partial\eta_i^2}(g*\tilde{h}_{ik_1}*\ldots*\tilde{h}_{k_{m-1}k_m})(\eta_i)
	%=&\bar{h}_{ik_1}\bar{h}_{k_1k_2}\ldots\bar{h}_{k_{m-1}k_m}\dfrac{\exp(a_m(\eta_i))}{\eta_i^3} \left(\eta_i^2a_m'(\eta_i)^2+\eta_i^2a_m''(\eta_i)-2\eta_ia_m'(\eta_i)+2\right) \\
	%=&\bar{h}_{ik_1}\bar{h}_{k_1k_2}\ldots\bar{h}_{k_{m-1}k_m}\dfrac{\exp(a_m(\eta_i))}{\eta_i^3} \left((\eta_ia_m'(\eta_i)-1)^2+\eta_i^2a_m''(\eta_i)+1\right)\\
	=\tilde{h}_{ik_1}\tilde{h}_{k_1k_2}\ldots\tilde{h}_{k_{m-1}k_m}\dfrac{\exp(\psi_{B^i}(\eta_i))}{\eta_i^3}\left((\eta_ia_m'(\eta_i)-1)^2+\eta_i^2a_m''(\eta_i)+1\right).
	\end{align*}
	By H$\overset{..}{\text{o}}$lder's inequality, the cumulant generating function $\psi$ is convex. Therefore, we have $$a_m''(\eta_i)=\sum_{l=0}^m\psi_{B^{k_l}}''(\eta_i)+\sum_{l=1}^m\psi_{f_{k_{l-1}k_l}}''(\eta_i)>0.$$ Now we can conclude that $(g*\tilde{h}_{ik_1}*\ldots*\tilde{h}_{k_{m-1}k_m})(\eta_i)$ is convex for any $m\ge1$ and $k_1, k_2,...,k_m\in\{1, 2, ...,d\}$. Similarly, we can check that $g''(\eta_i)>0$. Besides, by Assumption \ref{asp: stable} the row sums of matrix $\tilde{h}$, $\sum_{j=1}^d\tilde{h}_{ij}\le\underset{i=1,\ldots,d}{\max}\sum_{j=1}^d\tilde{h}_{ij}\triangleq u<1$. By Assumption \ref{asp: light tail} and the smoothness of $\psi$, for any given $\eta_i$, there exists a neighbourhood $E_i$ round $\eta_i$, such that for all $\eta\in E_i$, the quantities $$U_1=\max\limits_{j=1,\ldots,d}|\psi'_{B^j}(\eta)|,~ U_2=\underset{1\le l,k\le d}{\max}|\psi'_{f_{lk}}(\eta)|,~ U_3=\underset{j=1,\ldots,d}{\max}|\psi_{B^j}''(\eta)|,\text{ and }U_4=\underset{1\le l,k\le d}{\max}|\psi''_{f_{lk}}(\eta)|,$$
	are all finite. Then we can check that, for any $\eta\in E_i$, the infinite sums
	\begin{align*}
	&\sum_{m=1}^{\infty}\sum_{k_1=1}^d\sum_{k_2=1}^d\ldots\sum_{k_m=1}^d\frac{\partial}{\partial \eta}(g*\tilde{h}_{ik_1}*\ldots*\tilde{h}_{k_{m-1}k_m})(\eta)
	\leq \frac{\exp(\psi_{B^i}(\eta))}{\eta_i^2}\sum_{m=1}^{\infty} u^m\left( a m-1\right)<\infty\\
	&\sum_{m=1}^{\infty}\sum_{k_1=1}^d\sum_{k_2=1}^d\ldots\sum_{k_m=1}^d\frac{\partial^2}{\partial \eta_i^2}(g*\tilde{h}_{ik_1}*\ldots*\tilde{h}_{k_{m-1}k_m})(\eta_i)
	\leq \dfrac{\exp(\psi_{B^i}(\eta_i))}{\eta_i^3}\sum_{m=1}^{\infty} u^m\left( a^2 m^2+bm+2\right)<\infty,
	\end{align*} 
	where $a=\eta(U_1+U_2)$, $b=\eta^2(U_3+U_4)-2\eta(U_1+U_2)$. This justifies the exchange of derivative with infinite sum. Therefore, we can conclude $x''(\eta_i)>0$ for $i=1,2,...,d$  and $X(\mathbf{\eta})$ is convex.
\end{proof}	

\begin{remark}
	The complexity result \eqref{eq: complexity} not only guarantees that Algorithm \ref{alg: hawkes} terminates in finite time in expectation. As it is a convex function in algorithm parameter $\mathbf{\eta}$, one can apply some standard convex optimization procedure to find out the optimal $\mathbf{\eta}$ to minimize the computational cost of the simulation algorithm.
\end{remark}

\vskip 2ex

\section{Numerical Experiments}\label{sec: numerical}
We implement Algorithm \ref{alg: hawkes} in Python to test the performance and correctness of our perfect sampling algorithm. For a $d$-variate Hawkes Process with parameters $(\mathbf{\lambda_{0}},\bar{h},f(\cdot))$, its stationary intensity rate is $$[\sum_{k=1}^{d}\lambda_{0,k}S_{k,i}]_{i=1,\ldots,d}=(I-\bar{h}^T)^{-1}\mathbf{\lambda_{0}}.$$
In the following numerical experiments, we shall compare the theoretical value of intensity rate with the simulation estimation of $E(N(1))$ from 10000 rounds of perfect sampling algorithms. To test the complexity result \eqref{eq: complexity}, we also compare the average number of random variables generated in each simulation round with the theoretical value computed from \eqref{eq: complexity}. 

\paragraph{A symmetric 2-dimension case} We first consider a 2 dimensional Hawkes process with excitation function $h_{ij}(t)=\alpha_{ij}e^{-\beta_{ij}t}$, $\lambda_{0,i}=1$, $1\leq i,j\leq 2$ and\begin{align*}
[\alpha_{ij}]=\left( \begin{array}{cc}
1&2\\
2&1
\end{array}\right),~[\beta_{ij}]=\left( \begin{array}{cc}
2&8\\
8&2
\end{array}\right).  
\end{align*} Then, the stationary intensity rate of this Hawkes process is $[4,4]^T$ and the optimal $\eta^*\approx0.0664$ computed from \eqref{eq: complexity}. (As the process is symmetric, the optimal $\eta_i^*$ are equal in both directions $i=1,2$. )
We  then use Algorithm \ref{alg: hawkes} to simulate a stationary Hawkes process on time interval $[0,1]$ for different choices of $\eta$ (in both directions). If our algorithm is correct, the average number of events generated on this time interval should almost equal to the stationary intensity, i.e. $E[N_1(1)]=E[N_2(1)]=4$, regardless of the choice of algorithm parameter $\eta$. 

We report our simulation results for different values of $\eta$ in Table \ref{tb: Hawkes2}. In all cases, the 95\% confidence interval constructed from simulation data covers the true value of $4$. Besides, we see that the average random numbers (\#rvs) generated in one simulation round are close to the theoretical values and reach the minimum when $\eta$ is mostly close to the $\eta^*$.
\begin{table}[H]
	\centering
	\begin{tabular}{|c|c|c|c|c|}
		\hline
		$\eta$ & 95\%  Confidence Interval & \#rvs&Theoretical \#rvs&CPU time(s)\\
		\hline
		0.03&$4.0091\pm0.0699,~4.0343\pm0.0704$&395.8266&395.3016&0.030869\\
		0.05&$4.0153\pm0.0698,~4.0498\pm0.0718$&278.7656&279.6228&0.022252\\
		0.06 &$3.9975\pm0.0704,~3.9812\pm0.0687$ &261.9655 &260.4849 &0.021223 \\
		0.07&$4.0115\pm0.0693,~4.0417\pm0.0700$&258.1993&258.5722&0.021221\\
		0.08&$4.0301\pm0.0712,~ 4.0258\pm0.0711$&279.726&280.3890&0.022979\\
		0.09&$4.0272\pm0.0701,~ 4.0239\pm 0.0700$&372.3206&372.1390&0.030785\\
		\hline
	\end{tabular}
	\caption{95\% confidence interval for $E[N_i(1)],i=1,2$ and the average number of random variables generated in one simulation round, estimated from 10000 i.i.d. sample path of Hawkes process on $[0,1]$ simulated by Algorithm \ref{alg: hawkes} with different $\eta$.}
	\label{tb: Hawkes2}
\end{table}
\paragraph{Non-symmetric 5-dimension case}
Here we apply the Algorithm \ref{alg: hawkes} in a non-symmetric Hawkes process with higher dimension, and compare its performance with the naive method by simulating the Hawkes process for a long time to approximate the steady state. The parameters are  $\mathbf{\lambda_0}=[0.1,0.2,0.1,0.3,0.4]^T$ and\begin{align*}
[\alpha_{ij}]=\left( \begin{array}{ccccc}
0.8 & 0.8 & 0.2 & 0.8 & 1.    \\
0.8 & 0.1 & 0.9 & 0.1 & 0.5   \\
0.5 & 0.6 & 0.7 & 0.5 & 0.3   \\
0.2 & 0.9 & 0.9 & 0.7 & 0.4   \\
0.3 & 0.2 & 0.2 & 0.9 & 1.1 \\
\end{array}\right),~[\beta_{ij}]=\left( \begin{array}{ccccc}
4.9 & 4.1 & 4.9 & 3.3 & 3.3 \\
3.3 & 4.1 & 4.9 & 1.7 & 3.3 \\
7.3 & 5.7 & 4.9 & 7.3 & 5.7 \\
0.9 & 5.7 & 2.5 & 8.1 & 7.3 \\
6.5 & 3.3 & 3.3 & 7.3 & 4.9\\
\end{array}\right).  
\end{align*} Then, the stationary intensity rate of this Hawkes process is $[0.5640, 0.5534, 0.6163, 0.6860, 0.9346]^T$ and $\eta^*\approx[0.1234, 0.1306, 0.1405, 0.1234, 0.1378]$.

\begin{figure}[H]
	\centering
	\includegraphics[width=10cm]{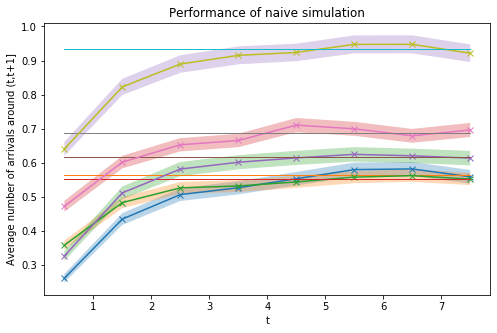}
	\caption{Estimated intensity rates on $(t,t+1]$ and 95\% confidence intervals from naive simulation of Hawkes process in 10000 simulation rounds (while the horizontal lines are the theoretical intensity rates).}
\end{figure}

Since we know the true value of the steady-state mean in this special case, Figure 1 implies that naive simulation requires at least 7 units of time for the process to reach the steady state. Therefore, we next report the result of naive simulation using the number events on $(6,7]$. In Table \ref{tb: Hawkes}, we compare the estimation result and algorithm complexity, measured by the total number of random variables generated, of the perfect sampling and naive algorithms. The results show that the complexity of our perfect sampling algorithm is larger than, but of the same magnitude as the naive simulation.  However, it is important to keep in
mind that there is no theoretical result that how many iterations in naive simulation is needed for the transient multivariate Hawkes process converge to its stationary version. So, in practice, extra computational cost is needed to diagnose whether the processes have been well stabilized or not, and thus the actual complexity might be much higher. In contrast, the samples of the perfect sampling algorithm are entirely unbiased with theoretical guarantee.

\begin{table}[H]	
	\centering
	\begin{tabular}{|c|c|c|c|c|}
		\hline
		Method& 95\%  Confidence Interval & \#rvs &Theoretical \#rvs&CPU time(s)\\
		\hline
		Alg.1&$\begin{array}{ccccc}
		0.5702&  0.5491&0.6056 &0.6808  &0.9264\\
		\pm&\pm&\pm&\pm&\pm\\
		0.0203,&0.0176,&0.0208,&0.0204,&0.0261
		\end{array}$ &56.9082&56.8234&0.0145\\
		Naive&$\begin{array}{ccccc}
		0.5816& 0.5618 &0.6201 &0.6788  &0.9474\\
		\pm&\pm&\pm&\pm&\pm\\
		0.0205	,&0.0182,&0.0210,&0.0203,&0.0264
		\end{array}$	
		&38.9925&-&0.0063\\
		\hline
	\end{tabular}
	\caption{Comparison results between perfect sampling and naive simulation by 10000 rounds of simulation. The perfect sampling algorithm is implemented with optimal $\eta^*$.}
	\label{tb: Hawkes}
\end{table}

\section{CONCLUSION}
In this paper we develop the first efficient perfect sampling algorithm that can generate stationary sample paths of multivariate Hawkes process exactly. An immediate application of our algorithm is in numerical computation for the steady state, or long-run average, of stochastic models that involves multivariate Hawkes processes, for instance, queuing networks with Hawkes arrivals. Our algorithm can be used to generate stationary arrivals to those models, and help reduce the estimation error caused by the transient bias in the Hawkes process distribution.  A possible direction of future research is to combine our algorithm with existing perfect sampling algorithms for queuing networks to develop perfect sampling algorithms for queuing networks with Hawkes arrivals.

\bibliographystyle{plain}
\bibliography{references}

\section*{AUTHOR BIOGRAPHIES}
\noindent {\bf XINYUN CHEN} is an Assistant Professor in the School of Data Science at the Chinese University of Hong Kong, Shenzhen. She received her Ph.D in Operations Research from Columbia University in 2014. Her research interests include applied probability, Monte Carlo method and their applications in financial markets. She has published papers in journals including Annals of Applied Probability, Mathematics of Operations Research and  Queuing Systems. Her email address is \href{mailto:chenxinyun@cuhk.edu.cn} {chenxinyun@cuhk.edu.cn} and her website is  \href{https://myweb.cuhk.edu.cn/chenxinyun}{https://myweb.cuhk.edu.cn/chenxinyun}.\\

\noindent {\bf XIUWEN WANG} is a Ph.D. student in the School of Data Science at the Chinese University of Hong Kong, Shenzhen and Shenzhen Research Institute of Big Data. She earned her bachelor degree in the School of Mathematic at Sun Yat-sen University. Her research interests include stochastic simulation design and analysis, simulation based optimization currently. Her email address is \href{mailto:wangxiuwen@link.cuhk.edu.cn}{wangxiuwen@link.cuhk.edu.cn}.

\end{document}